\newtheorem{theorem}{Theorem}[section]
\newtheorem{proposition}[theorem]{Proposition}
\newtheorem{definition}[theorem]{Definition}
\newtheorem{remark}[theorem]{Remark}
\newcommand{\dvol}{\,\mathrm{dvol}}
\newcommand{\R}{\mathbb{R}}
\newcommand{\geucl}{g_{\mathbb{E}}}
\newcommand{\rn}{d_{n}}
\newcommand{\mE}{m_{\mathrm{eff}}}
\begin{document}

\title[]{On a gravitational self-interaction parameter for point-particles}

\author{Iva Stavrov Allen}
\address{Lewis \& Clark College}
\email{istavrov@lclark.edu}

\date{}

\keywords{}

\begin{abstract} Relativistic, electrically neutral point-particles can be given mathematical foundation by doing a careful accounting of self-interaction energies; see \cite{NI}. In this paper we examine a self-interaction parameter and present a continuous framework which interpolates between classical and relativistic point-particles of \cite{NI}.
\end{abstract}

\maketitle

\section*{Introduction}

The experience with classical physical theories such as Newtonian gravity makes it tempting to interpret the Schwarzschild space-time as a result of point mass serving as a gravitational source. However, non-linearity of Einstein's equations makes the use of standard (linear, Schwarz) distribution theory (e.g. Dirac delta distributions) difficult. One outstanding question in relativity is whether one can put the above interpretation of Schwarzschild geometry on rigorous mathematical footing. 

One may want to look for such a footing within a class of metrics with distributional curvature. The authors of \cite{Strings} perform an investigation of such metrics and conclude that point-sources are not well-defined even within their wide regularity class:
\begin{quote}
Indeed, it now seems likely that there is in general relativity no mathematical framework whatever for matter sources concentrated on one-dimensional surfaces in space-time.
\end{quote}

A more recent treatment of the subject is given in \cite{HS} where the authors make use of mathematically demanding non-linear distribution theory (Colombeau algebras). The approach is successful in that it presents the Schwarzschild energy-momentum tensor in the form of the Dirac delta distribution. However, since the starting point of \cite{HS} is the Schwarzschild metric itself the approach only makes it \emph{a posteriori} clear that Schwarzschild metric is related to some physical content being concentrated at a point. 

It would be considerably more desirable to have an \emph{a priori} treatment of the subject, that is, a treatment which makes it clear that Schwarzschild metric is \emph{a necessary outcome} of matter becoming concentrated at a point. This is exactly what is achieved in \cite{NI}, albeit in the context of time-symmetric conformally flat initial data. In addition to being mathematically approachable, the method in \cite{NI} makes it manifestly obvious that interaction energies play a central role in the process of concentrating physical matter to a point. In fact, the subject of \cite{NI} is exactly this connection between (self-)interaction and non-linearity of the Hamiltonian constraint (de-facto relativistic Poisson equation). It is argued in \cite{NI} that 
the said connection is captured by a self-interaction parameter, $\alpha$. 

In this note we probe the self-interaction parameter $\alpha$ further. Specifically, we modify the Poisson equation to include this continuous parameter $\alpha$, with $\alpha=0$ corresponding to the classical (Newtonian) Poisson equation, and $\alpha=1$ corresponding to the relativistic Poisson equation of \cite{NI}. Our main result (Theorem \ref{thmbaaaa}) shows that, relative to our modification of the Poisson equation, the matter distributions with self-interaction parameter $\alpha$ result in ``generalized gravitational potentials" of the form of classical gravitational potentials for point-mass $\mE(\alpha)$. We further show that $\mE(\alpha)$ depends continuously on $\alpha$, a feature that the framework in \cite{NI} does not have.

\subsection*{Acknowledgments}
The research is funded by John S. Rogers Science Research Program at Lewis \& Clark College. I.S. is extremely grateful to Noah Benjamin for all the inspiration and all the conversations on the topic, and is saddened by the inability to coauthor this note with him. 

\section{Premilinaries}
\subsection{Notation} 
Throughout our paper we work within the conformal class $\theta^4\geucl$ of the Euclidean metric $\geucl$ and take $\omega=\phi\dvol_{\geucl}$ to be a smooth, compactly supported matter distribution on $\R^3$. We always assume $\phi\ge 0$. The asymptotic conditions which ensure asymptotically Euclidean data are
\begin{equation}\label{asymptotics}
\left|\partial_x^l\!\left(\theta(x)-1\right)\right|=O(|x|^{-l-1}),\ \ |x|\to \infty,\ \ l\ge0.
\end{equation}
Since $R(\theta^4\geucl)=-8\theta^{-5}\Delta_{\geucl}\theta$, the Hamiltonian constraint is equivalent to 
\begin{equation}\label{TheEqn}
\theta \Delta_{\geucl}\theta \dvol_{\geucl}=-4\pi \tfrac{G}{2c^2} \omega.
\end{equation}
We refer to \eqref{TheEqn} as the relativistic Poisson equation, as opposed to the Newtonian Poisson equation $\Delta_{\geucl}\theta \dvol_{\geucl}=-4\pi \tfrac{G}{2c^2} \omega$. 

\subsection{The interaction coupling} We can understand the coupling of $\theta$ to $\Delta_{\geucl}\theta$ in \eqref{TheEqn} as a form of gravitational interaction within the matter $\omega$ itself. This is most easily seen on the example of Brill-Lindquist data \cite{BL}. Specifically, the Brill-Lindquist conformal factor $\theta=1+\frac{G}{2c^2}\sum\frac{m_i}{|x-p_i|}$ corresponds to a configuration of point-particles of mass $m_i$ located at $p_i$. Evaluating the equation \eqref{TheEqn} at such $\theta$ yields to a (mathematically not rigorous) identity
$$\left(1+\frac{G}{2c^2}\sum_j\frac{m_j}{|x-p_j|}\right)\sum_i m_i\delta_{p_i}=\sum_i m_i\delta_{p_i} + \frac{G}{2c^2} \sum_{ij}  \frac{m_i m_j}{|p_i-p_j|}\delta_{p_i}=\omega.$$ 
This identity decomposes the bare mass $\omega$ into the effective mass $\sum_i m_i\delta_{p_i}$, which is expected from the standpoint of the Newtonian  Poisson equation, and a ``new" interaction energy term $\tfrac{G}{2c^2} \sum_{ij}  \frac{m_i m_j}{|p_i-p_j|}\delta_{p_i}$.

\subsection{The self-similarity framework and the summary of \cite{NI}.} The framework of \cite{NI} involves (approximately) self-similar families of distributions $\omega_n$ supported on $B_{\geucl}(0,r_n)$ with $r_n\to 0$. We re-define self-similarity for the purposes of our paper below. The reader should keep in mind that the idea of \cite{NI} is to investigate the limit of geometries $\theta_n^4\geucl$, which are related to $\omega_n$ by means of the relativistic Poisson equation \eqref{TheEqn}. 

\begin{definition}\label{framework}
Let $\Omega$ be a distribution supported on a compact subset of $\R^3$ and let  $\rn=\left(\tfrac{G}{2c^2}\cdot\tfrac{m}{r_n}\right)^{-1}$ with $m := \int_{\R^3}^{}\Omega$. A sequence of distributions $\omega_n$ is self-similar of $\Omega$-type and interactivity $\alpha\in[0,1]$ if for the dilation $\mathcal{H}_{\rn}: x\mapsto \rn\,x$ we have 
\begin{equation}\label{selfsimilaritydefn}
\Omega=(\rn)^\alpha\cdot \mathcal{H}_{\rn}^*\omega_n, \text{\ \ i.e.\ \ }
\omega_n = (\rn)^{-\alpha}\cdot \left(\mathcal{H}_{\rn}\right)_*\Omega.
\end{equation}
\end{definition}

In the case of $\alpha=0$ the sequence of matter distributions $\omega_n$ approaches a multiple of the Dirac delta distribution. The idea behind the factor of $\rn^{-\alpha}$ in \eqref{selfsimilaritydefn} is that it adds extra energy -- the interaction energy present in a construction of point particle initial data. The findings of \cite{NI} are summarized in the following diagram. 

\begin{figure}[h]
\centering
\begin{tikzpicture}[scale=.35]

\draw (-0.5,-6) to (6.25, -6) to (8.25,-4) to (1.5, -4) to (-0.5, -6);

\draw[thick, dashed] (2, -4.5) to [out=-30, in=100] (2.75, -6);
\draw[thick] (2.75, -6) to [out=-80, in=180] (3.75, -6.5) to [out=0, in=-100] (4.75, -6);
\draw[thick, dashed] (4.75, -6) to [out=80, in=-150] (5.75, -4.5);

\node at (2.9, -8.5) {$\theta_n\Delta_{\geucl} \theta_n=-4\pi \tfrac{G}{2c^2} \omega_n$};

\draw (8, -5) to (11, -5);
\draw (10.8, -5.2) to (11, -5) to (10.8, -4.8);

\draw (11.5,-6) to (16.75, -6) to (18.75,-4) to (13.5, -4) to (11.5, -6);

\draw[thick, dashed] (13.9, -6) to [out=60, in=180] (15, -5.25) to [out=0, in=120] (16.1, -6);
\draw[thick] (13.9, -6) to [out=-120, in=90] (13.75, -6.5) to [out=-90, in=180] (15.25, -7.75) to [out=0, in=-90] (16.25, -6.5) to [out=90, in=-60] (16.1, -6);

\draw [fill] (15, -5.25) circle (0.1);

\node at (14.25,-8.5) {$\alpha=0$};

\draw (19.25,-6) to (24.5, -6) to (26.5,-4) to (21.25, -4) to (19.25, -6);

\draw[thin, dashed] (22.5, -4.85) to [out=-30, in=100] (22.95, -5.25) to [out=-90, in=45] (22.75, -5.7);
\draw[thin, dashed] (23.5, -4.85) to [out=-150, in=80]  (23.05, -5.25) to [out=-90, in=135] (23.25, -5.7);

\draw [fill] (23, -5.25) circle (0.1);

\draw (19.5,-6.25) to (24.75, -6.25) to (26.75,-4.25);
\draw[dashed] (26.75,-4.25) to (21.5, -4.25) to (19.5, -6.25);

\node at (22,-8.5) {$0<\alpha<1$};

\draw (26.75,-6) to (31.5, -6) to (33.5,-4) to (28.75, -4) to (26.75, -6);

\draw[thick, dashed] (29, -4.5) to [out=-30, in=90] (29.75, -6);
\draw[thick] (29.75, -6) to [out=-90, in=25] (28.75, -8);
\draw[thick] (31.75, -8) to [out=160, in=-90] (30.6, -6);
\draw[thick, dashed] (30.6, -6) to [out=90, in=-150] (31.45, -4.37);

\draw[thick] (29.8, -6.15) to [out=-20, in=-160] (30.6, -6.15);

\draw (26.5,-9) to (31.5, -9) to (33.5,-7) to (28.5, -7) to (26.5, -9);

\node at (29.5,-8.5) {$\alpha=1$};

\end{tikzpicture}
\caption{Analysis of \cite{NI}.}\label{fig1}
\end{figure}
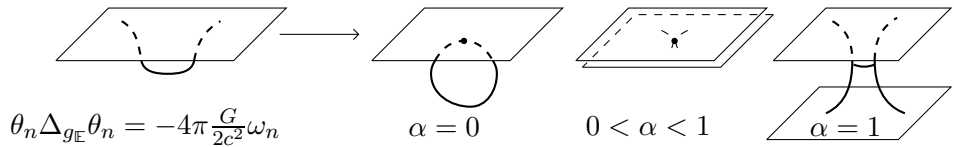

For $\alpha=0$ the ADM mass vanishes in the limit and the contents of $\Omega$ are trapped in ``a bubble". If one adds back an insufficient amount of matter, corresponding to $\alpha\in(0,1)$, the mass still vanishes in the limit. It is only when $\alpha=1$ that one is adding enough energy to obtain  Schwarzschild initial data and non-vanishing mass. The reader should note \emph{the lack of continuity} in $\alpha$ that is present in the mass results of \cite{NI}. 

\section{Generalized Poisson equation}
In this paper we investigate the limits of generalized gravitational potentials $\theta_{\alpha,n}$ obtained from solving the generalized Poisson equation:
\begin{equation}\label{TheAlphaEqn}
\theta^\alpha \Delta_{\geucl}\theta \dvol_{\geucl}=-4\pi \tfrac{G}{2c^2} \omega,\ \ \lim_{x\to \infty} \theta(x)=1
\end{equation}
with $\alpha\in[0,1]$. As above, the coupling of $\theta^\alpha$ to $\Delta_{\geucl}\theta$ models gravitational interactivity. The continuous parameter $\alpha$ marks the departure from the Newtonian Poisson equation ($\alpha=0$) towards its relativistic counterpart (the Brill-Lindquist example; $\alpha=1$). The idea here is to \emph{continuously} transition from the (non-)interactivity of the Newtonian matter toward the full interactivity of the relativistic matter. 

\begin{proposition}\label{NoahThm}
For each $\alpha\in[0,1]$ there exists a unique solution $\theta_\alpha$ of \eqref{TheAlphaEqn}. The family $\theta_\alpha$ is continuous in $\alpha$ in the sense that for all convergent sequences $\alpha_n \to \alpha$ we have convergences $\theta_{\alpha_n}\to \theta_\alpha$ with all derivatives on all compact subsets of $\R^3$.
\end{proposition}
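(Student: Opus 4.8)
The plan is to treat \eqref{TheAlphaEqn} as a semilinear elliptic equation and to produce explicit ordered barriers that are \emph{uniform in} $\alpha$; uniqueness will come from the maximum principle, and continuity in $\alpha$ from elliptic compactness together with uniqueness. Write $\kappa=4\pi\tfrac{G}{2c^2}>0$. Where $\theta>0$, equation \eqref{TheAlphaEqn} is equivalent to
\begin{equation}\label{semilinear}
\Delta_{\geucl}\theta=-\kappa\,\phi\,\theta^{-\alpha},\qquad \lim_{|x|\to\infty}\theta(x)=1 .
\end{equation}
Let $u(x)=\tfrac{1}{4\pi}\int_{\R^3}\tfrac{\phi(y)}{|x-y|}\dvol_{\geucl}(y)$, the Newtonian potential with $\Delta_{\geucl}u=-\phi$, $u\ge0$, $u=O(|x|^{-1})$. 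First I would check that $\underline\theta\equiv1$ is a subsolution, since $\Delta_{\geucl}\underline\theta+\kappa\phi\,\underline\theta^{-\alpha}=\kappa\phi\ge0$, and that $\bar\theta:=1+\kappa u$ is a supersolution: because $\bar\theta\ge1$ forces $\bar\theta^{-\alpha}\le1$, one has $\Delta_{\geucl}\bar\theta+\kappa\phi\,\bar\theta^{-\alpha}=-\kappa\phi+\kappa\phi\,\bar\theta^{-\alpha}\le0$. As $1=\underline\theta\le\bar\theta$, the standard method of sub- and supersolutions (monotone iteration with the coercive operator $\Delta_{\geucl}-K$, $K:=\kappa\|\phi\|_\infty$, chosen so that $\theta\mapsto\kappa\phi\,\theta^{-\alpha}+K\theta$ is nondecreasing on $[1,\bar\theta]$, plus exhaustion of $\R^3$ by balls to accommodate the unbounded domain) yields a solution $\theta_\alpha$ of \eqref{semilinear} with
\begin{equation}\label{sandwich}
1\le\theta_\alpha\le1+\kappa u .
\end{equation}
In particular $\theta_\alpha>0$, so $\theta_\alpha^{-\alpha}$ is well defined, and the sandwich \eqref{sandwich} both gives the decay $\theta_\alpha\to1$ and is \emph{independent of} $\alpha$.

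For uniqueness, suppose $\theta_1,\theta_2$ are positive solutions of \eqref{semilinear} and set $w=\theta_1-\theta_2$, so $w\to0$ at infinity. On the open set $\{w>0\}$ we have $\theta_1>\theta_2>0$, hence $\theta_1^{-\alpha}\le\theta_2^{-\alpha}$, and therefore
\begin{equation*}
\Delta_{\geucl}w=-\kappa\phi\,(\theta_1^{-\alpha}-\theta_2^{-\alpha})\ge0 ,
\end{equation*}
i.e. $w$ is subharmonic there. For $\e>0$ the set $\{w>\e\}$ is bounded (as $w\to0$) with closure inside $\{w>0\}$; the maximum principle on $\overline{\{w>\e\}}$ forces $w\le\e$ on $\{w>\e\}$, a contradiction unless $\{w>\e\}=\emptyset$. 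Letting $\e\downarrow0$ gives $w\le0$, and by symmetry $w\equiv0$. This uses only the sign $\phi\ge0$ and the monotonicity of $t\mapsto t^{-\alpha}$, so no lower bound on the $\theta_i$ is needed.

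For continuity, let $\alpha_n\to\alpha$. By \eqref{sandwich} the $\theta_{\alpha_n}$ obey $1\le\theta_{\alpha_n}\le1+\kappa u$ uniformly in $n$, so the right-hand sides $-\kappa\phi\,\theta_{\alpha_n}^{-\alpha_n}$ are uniformly bounded and supported in $\operatorname{supp}\phi$. Interior elliptic estimates then bound $\theta_{\alpha_n}$ in $C^{1,\beta}_{\mathrm{loc}}$; since $\phi$ is smooth and $t\mapsto t^{-\alpha}$ is smooth for $t\ge1$, a Schauder bootstrap promotes this to uniform bounds in $C^{k}_{\mathrm{loc}}$ for every $k$ (in particular every $\theta_\alpha$ is smooth). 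By Arzel\`a--Ascoli every subsequence has a further subsequence converging in $C^{k}_{\mathrm{loc}}$ for all $k$ to some $\theta_\star$. Because $(\theta,\alpha)\mapsto\theta^{-\alpha}=e^{-\alpha\log\theta}$ is continuous for $\theta\ge1$, one may pass to the limit in \eqref{semilinear}; the uniform sandwich \eqref{sandwich} survives and yields $\theta_\star\to1$ at infinity, so $\theta_\star$ solves \eqref{semilinear} at exponent $\alpha$. Uniqueness forces $\theta_\star=\theta_\alpha$, and since every subsequence has a further subsequence with this same limit, the full sequence converges: $\theta_{\alpha_n}\to\theta_\alpha$ in $C^k_{\mathrm{loc}}$ for all $k$.

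I expect the main obstacle to be technical rather than conceptual: making the sub/supersolution scheme and the maximum principle rigorous on the \emph{unbounded} domain $\R^3$ while enforcing the condition at infinity. This is handled by exhausting $\R^3$ with balls $B_R$, solving with Dirichlet data read off from the supersolution, and using the $\alpha$-independent barrier \eqref{sandwich} (whose $O(|x|^{-1})$ decay comes from the compact support of $\phi$) to control the limit $R\to\infty$ and to transfer the asymptotic normalization uniformly in $n$ during the continuity argument.
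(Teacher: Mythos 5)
Your proposal is correct, but it reaches the conclusion by a genuinely different route than the paper. For existence the paper runs the alternating integral-operator iteration of \cite{NI} (whose even and odd subsequences converge to ordered limits $\theta_-\le\theta_+$) and then needs a separate argument --- Jensen's inequality applied to the concave map $t\mapsto t^\alpha$ together with the maximum principle --- to force $\theta_-=\theta_+$; you instead use the classical sub/supersolution scheme with the explicit $\alpha$-independent barriers $1$ and $1+\kappa u$ and the $K$-shift to restore monotonicity of the nonlinearity, which sidesteps the two-limit issue entirely (note your upper barrier $1+\kappa u$ is exactly the paper's Newtonian solution $\theta_0$, so your sandwich coincides with the endpoints of the paper's estimate \eqref{monotone}). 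For continuity in $\alpha$, the paper first proves the monotonicity $1\le\theta_1\le\theta_\beta\le\theta_\alpha\le\theta_0$ via the Mean Value Theorem and the strong maximum principle, reduces to monotone sequences $\alpha_n$, and uses monotone convergence plus the representation formula to identify the limit; you avoid monotonicity altogether by combining the uniform barrier, elliptic compactness, and uniqueness via the subsequence-of-a-subsequence argument. Your route is more self-contained and standard; the paper's route has the advantage that the monotonicity \eqref{monotone} it establishes is not merely a proof device but is reused in Proposition \ref{AngryIvaThm} (for the lower bound \eqref{lowerbound} and the comparison \eqref{newmonotonicity}), so if your approach were adopted one would still need to prove \eqref{monotone} separately for the later arguments. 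One small point of care: the statement's uniqueness should be understood among positive solutions (as you assume), since $\theta^{-\alpha}$, and indeed the equivalence of \eqref{TheAlphaEqn} with your semilinear form, requires $\theta>0$; solutions arising from the representation formula automatically satisfy $\theta\ge1$, so this is consistent with the paper.
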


\begin{proof} 
We first establish existence and uniqueness. In the case of $\alpha=0$ there is nothing to show and so we proceed by fixing a value of $\alpha\in(0,1]$. For reasons of notational simplicity we temporarily drop $\alpha$ from the subscript. We implement  the strategy of \cite{NI} which is based on the recursive sequence 
$$ \theta_{m+1}(x):=1+\frac{G}{2c^2}\int_{y\in \R^3} \frac{\omega(y) }{|x-y| \theta_m^\alpha(y)},\ \ \theta_0(x)=1.$$
As in \cite{NI} one proves that the sequences $\theta_{2n}$ and $\theta_{2n+1}$ converge on all compacts to functions $\theta_-$ and $\theta_+$ satisfying $1\le \theta_-\le \theta_+$ and the boundary conditions $\theta_\pm\to 1$. Furthermore, it follows that 
$$\Delta_{\geucl}(k\theta_-+(1-k)\theta_+) =-4\pi \tfrac{G}{2c^2}\phi\cdot  \tfrac{k\theta_-^\alpha+(1-k)\theta_+^\alpha}{\theta_-^\alpha\theta_+^\alpha}$$
for all constants $k$. 

If $\theta_-\neq \theta_+$, i.e $\theta_-^\alpha<\theta_+^\alpha$ somewhere, then for some positive constant $k>1$ the function 
$$\theta_+^\alpha+k(\theta_-^\alpha-\theta_+^\alpha)=k\theta_-^\alpha+(1-k)\theta_+^\alpha$$
achieves a value less than $1$. By taking $k>1$ not too large, we may assume that the said interior minimum value of $k\theta_-^\alpha+(1-k)\theta_+^\alpha$ is positive. Next, we argue that $k\theta_-+(1-k)\theta_+$ for that specific value of $k$ reaches an interior minimum. There would be nothing to show if $k\theta_-+(1-k)\theta_+$ were to turn negative so we assume $k\theta_-+(1-k)\theta_+>0$ on $\R^3$. Since the function $x\mapsto x^\alpha$ is concave down, and since $k>1$, Jensen's Inequality implies 
$$(k\theta_-+(1-k)\theta_+)^\alpha\le k\theta_-^\alpha+(1-k)\theta_+^\alpha$$
over $\R^3$. It follows that the functions $(k\theta_-+(1-k)\theta_+)^\alpha$ and $k\theta_-+(1-k)\theta_+$ reach values -- and thus interior minimum values -- less than $1$. At the point of minimum we have $\Delta_{\geucl}(k\theta_-+(1-k)\theta_+)\ge 0$ while $-4\pi \frac{G}{2c^2}\phi\cdot \frac{k\theta_-^\alpha+(1-k)\theta_+^\alpha}{\theta_-^\alpha\theta_+^\alpha}<0$. This contradiction shows that $\theta_-=\theta_+$, and proves the existence of solutions of \eqref{TheAlphaEqn}. The uniqueness of solutions follows from the Strong Maximum Principle as in \cite{NI}, with very minor modifications to accommodate for the parameter $\alpha$.

We now focus on continuity in the parameter $\alpha$. Fix $0\le \alpha<\beta\le 1$. The difference $\theta_\beta-\theta_\alpha$ satisfies 
$$\Delta_{\geucl}(\theta_\beta-\theta_\alpha)+4\pi\tfrac{G}{2c^2} \tfrac{\phi}{\theta_\alpha^\alpha\theta_\beta^\beta}\left(\theta_\alpha^\alpha-\theta_\beta^\beta\right)=0,$$
which by the Mean Value Theorem becomes
$$
\Delta_{\geucl}(\theta_\beta-\theta_\alpha)-4\pi \alpha \tfrac{G}{2c^2}\tfrac{\phi}{\theta_\alpha^\alpha\theta_\beta^\beta}\theta_*^{\alpha-1}\left(\theta_\beta-\theta_\alpha\right)= 4\pi \tfrac{G}{2c^2} \tfrac{\phi}{\theta_\alpha^\alpha\theta_\beta^\beta}\left(\theta_\beta^\beta-\theta_\beta^\alpha\right)\ge 0.
$$
It follows from the Strong Maximum Principle that $\theta_\beta-\theta_\alpha$ cannot reach a nonnegative interior maximum unless it is a constant. Given that $\theta_\beta-\theta_\alpha$ obeys a Dirichlet boundary condition, we obtain 
\begin{equation}\label{monotone}
1\le \theta_1\le \theta_\beta\le \theta_\alpha\le \theta_0.
\end{equation}

Now suppose that $\alpha_n\to \alpha$; without loss of generality we may assume that the sequence $(\alpha_n)$ is monotone. It follows from \eqref{monotone} that both $(\theta_{\alpha_n})$ and $\left(\frac{\phi}{\theta_{\alpha_n}^{\alpha_n}}\right)$ are bounded in $L^2(K)$ for all compact subsets $K$. By the Interior Elliptic Regularity we see that $(\theta_{\alpha_n})$ is bounded in $H^2(K)$ for all compact subsets $K$. By the Rellich Lemma and the Sobolev embedding we get a subsequential convergence of $(\theta_{\alpha_n})$ to some $\theta\in C^0(K)$. However, monotonicity \eqref{monotone} ensures that the entire sequence 
$(\theta_{\alpha_n})$ converges to $\theta$. The standard bootstrapping argument based on Interior Elliptic Regularity now shows that the convergence to $\theta$ happens in each and every $H^k(K)$. Taking the limit as $n\to \infty$ in the representation formula 
$$\theta_{\alpha_n}(x)=1+\frac{G}{2c^2}\int_y \frac{\omega(y)}{|x-y|\theta_{\alpha_n}(y)^{\alpha_n}}$$
shows that the limit $\theta$ solves \eqref{TheAlphaEqn}. Since the said solutions are unique it must be that $\theta=\theta_\alpha$ and our proof is complete. 
\end{proof}

\begin{remark} It may be tempting to try to prove Proposition \ref{NoahThm} by the method of continuity / Implicit Function Theorem. The difficulty in that approach lies in the need to use Fredholm theory for Laplace operators on weighted function spaces. Unfortunately, the weights needed to get such a proof off the ground are outside the Fredholm range. 
\end{remark}

To prove our main result we need the following extension of Proposition \ref{NoahThm}. The reader should note two things about the extension: one is the existence of the solution of the generalized Poisson equation with the zero boundary condition (important, as it appears in the statement of our main result), while the other is its continuity in $\alpha$. 

\begin{proposition}\label{AngryIvaThm}
For each $a\in[0,1]$ and each $\alpha\in[0,1]$ there exists a unique solution $\theta_{a,\alpha}$ of 
\begin{equation}\label{TheAlphaEqn-modified}
\theta_{a,\alpha}^\alpha \Delta_{\geucl}\theta_{a,\alpha} \dvol_{\geucl}=-4\pi \tfrac{G}{2c^2} \omega,\ \ \lim_{x\to \infty} \theta_{a,\alpha}(x)=a.
\end{equation}
The family $\theta_{a,\alpha}$ depends continuously on $(a,\alpha)$ in the sense that for all convergent sequences $(a_n,\alpha_n)\to (a,\alpha)$ in the permissible range we have convergences $\theta_{a_n,\alpha_n}\to \theta_{a,\alpha}$ with all derivatives on all compact subsets of $\R^3$.
\end{proposition}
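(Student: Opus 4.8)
The plan is to reduce the generic case $a>0$ to Proposition \ref{NoahThm} by scaling, and then to obtain the delicate case $a=0$ as a monotone limit. First I would observe that the proof of Proposition \ref{NoahThm} never uses anything about $\omega$ beyond smoothness, compact support, and nonnegativity, so it applies verbatim to any such matter distribution. For $a\in(0,1]$ I would then substitute $\theta_{a,\alpha}=a\,\psi$ into \eqref{TheAlphaEqn-modified}; since $\theta_{a,\alpha}^\alpha\Delta_{\geucl}\theta_{a,\alpha}=a^{1+\alpha}\psi^\alpha\Delta_{\geucl}\psi$, the function $\psi$ solves the equation of Proposition \ref{NoahThm} for the rescaled matter $a^{-(1+\alpha)}\omega$ with boundary value $1$. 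This substitution is a bijection between solutions, so existence, uniqueness, and interior smoothness of $\theta_{a,\alpha}$ for $a>0$ follow at once.

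Before treating $a=0$ I would record two comparison facts obtained from the Strong Maximum Principle exactly as in the continuity argument of Proposition \ref{NoahThm}. Writing $w=\theta_{b,\alpha}-\theta_{a,\alpha}$ for $0<a<b$, the Mean Value Theorem turns the difference of the two equations into $\Delta_{\geucl}w+q\,w=0$ with $q=-4\pi\tfrac{G}{2c^2}\alpha\phi\,\theta_*^{-\alpha-1}\le0$; since $w\to b-a>0$ at infinity, the maximum principle gives $\theta_{a,\alpha}\le\theta_{b,\alpha}$, i.e. monotonicity in $a$. Combined with the monotonicity in $\alpha$ from Proposition \ref{NoahThm} at $a=1$, this yields the uniform a priori bound $\theta_{a,\alpha}\le\Theta:=1+\tfrac{G}{2c^2}\int_y\frac{\omega(y)}{|x-y|}$, valid for all $(a,\alpha)\in[0,1]^2$.

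The case $a=0$ is the \emph{main obstacle}, since here the scaling degenerates. I would define $\theta_{0,\alpha}:=\lim_{a\downarrow0}\theta_{a,\alpha}$, which exists pointwise by the monotonicity just established and is bounded above by $\Theta$. The key estimate is a lower bound: inserting $\theta_{a,\alpha}\le\Theta$ into the representation formula $\theta_{a,\alpha}(x)=a+\tfrac{G}{2c^2}\int_y\frac{\omega(y)}{|x-y|\theta_{a,\alpha}(y)^\alpha}$ gives $\theta_{a,\alpha}\ge \tfrac{G}{2c^2}\int_y\frac{\omega(y)}{|x-y|\Theta(y)^{\alpha}}$, a fixed, strictly positive function on every compact set, uniformly in $(a,\alpha)$. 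Consequently the coefficient $\phi\,\theta_{a,\alpha}^{-\alpha}$ is uniformly bounded in $L^\infty$, so interior elliptic regularity and bootstrapping (as in Proposition \ref{NoahThm}) upgrade the monotone convergence to convergence in every $H^k(K)$, and passing to the limit in the representation formula shows $\theta_{0,\alpha}(x)=\tfrac{G}{2c^2}\int_y\frac{\omega(y)}{|x-y|\theta_{0,\alpha}(y)^\alpha}$. The same lower bound forces $\theta_{0,\alpha}>0$ on compacts, and bounding $\theta_{0,\alpha}^{-\alpha}$ on $\mathrm{supp}\,\omega$ in the representation formula yields $\theta_{0,\alpha}=O(|x|^{-1})\to0$, so $\theta_{0,\alpha}$ is a genuine solution with the zero boundary value. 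Uniqueness at $a=0$ follows from the identical $\Delta_{\geucl}w+q\,w=0$, $q\le0$ computation for the difference of two positive solutions, now with $w\to0$ at infinity.

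Finally, for continuity in $(a,\alpha)$ I would run a compactness argument rather than rely on monotonicity, since for $a<1$ the monotonicity in $\alpha$ can fail: the sign of $\theta_{a,\beta}^\beta-\theta_{a,\beta}^\alpha$ is no longer controlled once $\theta_{a,\beta}<1$. Given $(a_n,\alpha_n)\to(a,\alpha)$, the uniform upper bound $\Theta$ and the uniform lower bounds on compacts give uniform $H^2(K)$ bounds via interior elliptic regularity; by Rellich, Sobolev embedding, and bootstrapping, every subsequence of $(\theta_{a_n,\alpha_n})$ has a further subsequence converging in $C^\infty_{\mathrm{loc}}$ to some $\theta_*$. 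Passing to the limit in the representation formula—the uniform lower bound on $\mathrm{supp}\,\omega$ legitimizing the limit of $\theta_{a_n,\alpha_n}^{-\alpha_n}$—shows $\theta_*$ solves \eqref{TheAlphaEqn-modified} with parameters $(a,\alpha)$. By the uniqueness proved above, $\theta_*=\theta_{a,\alpha}$, and since every subsequence has a sub-subsequence with this same limit, the full sequence converges, completing the proof.
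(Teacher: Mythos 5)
Your proposal is correct, and it follows the paper's proof in its first half but departs from it in a genuinely different way in the second. The scaling reduction $\theta_{a,\alpha}=a\psi$ for $a>0$, the Strong-Maximum-Principle monotonicity in $a$, the uniform positive lower bound on $\mathrm{supp}(\omega)$, and the realization of $\theta_{0,\alpha}$ as the limit $a\downarrow 0$ are all exactly the paper's steps (the paper derives its lower bound \eqref{lowerbound} from $\theta_{a,\alpha}\ge\theta_{1,\alpha}-1\ge\theta_{1,1}-1$ rather than from your upper barrier $\Theta$ fed back into the representation formula, but the two are interchangeable). Where you diverge is in the continuity argument. The paper sharpens monotonicity in $a$ to the quantitative Lipschitz estimate $0\le\theta_{b,\alpha}-\theta_{a,\alpha}\le b-a$ (its \eqref{cty-a}), and then --- precisely because, as you correctly diagnose, the naive $\alpha$-monotonicity of \eqref{monotone} breaks down once the boundary value drops below $1$ --- it constructs an auxiliary rescaled solution $\theta_{\mathrm{aux}}$ to prove the substitute monotonicity $a^\beta\theta_{a,\beta}\le a^\alpha\theta_{a,\alpha}$ (its \eqref{newmonotonicity}), after which it reruns the monotone-convergence argument of Proposition \ref{NoahThm} in $\alpha$ and uses the Lipschitz estimate for uniform Cauchyness in $a$. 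You instead bypass all of the $\alpha$-monotonicity machinery with a compactness--uniqueness argument: uniform two-sided bounds give $H^2_{\mathrm{loc}}$ precompactness, every subsequential limit solves \eqref{TheAlphaEqn-modified} by passing to the limit in the representation formula, and uniqueness identifies the limit. This is softer but buys you joint continuity in $(a,\alpha)$ in a single stroke, whereas the paper proves continuity separately in each variable and relies (implicitly) on the uniformity of \eqref{cty-a} in $\alpha$ to conclude the joint statement; what you lose are the explicit comparison formulas \eqref{cty-a} and \eqref{newmonotonicity} themselves, which the paper's route produces as byproducts. The one place to be slightly more careful is uniqueness at $a=0$: your Mean Value Theorem coefficient $\theta_*^{-\alpha-1}$ requires both competing solutions to be bounded away from zero on $\mathrm{supp}(\omega)$, which should be extracted from the representation formula before the maximum principle is applied --- but this is the same level of care the paper itself takes for granted.
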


\begin{proof}
The uniqueness of solutions of \eqref{TheAlphaEqn-modified} follows by the same Strong Maximum Principle argument as in the proof of Proposition \ref{NoahThm}. The existence of solutions of \eqref{TheAlphaEqn-modified} in the cases when $a\neq 0$ is a consequence of Proposition \ref{NoahThm} because a function $\theta_{a,\alpha}$ serves as a solution of \eqref{TheAlphaEqn-modified} if and only if the function $\frac{\theta_{a,\alpha}}{a}$ serves as a solution of \eqref{TheAlphaEqn} with $\omega$ replaced by $\frac{1}{a^{1+\alpha}}\omega$. The existence of solutions of \eqref{TheAlphaEqn-modified} for $a=0$ is established later on in this proof. 

Temporarily fix some $0<a<b\le 1$ and a value of $\alpha\in[0,1]$. Consider solutions $\theta_{a,\alpha}$ and $\theta_{b,\alpha}$ of \eqref{TheAlphaEqn-modified}. By the Mean Value Theorem the difference $\theta_{a,\alpha}-\theta_{b,\alpha}$ satisfies
$$\Delta_{\geucl}(\theta_{a,\alpha}-\theta_{b,\alpha})-\frac{4\pi\alpha\phi}{\theta_*^{\alpha+1}}\left(\theta_{a,\alpha}-\theta_{b,\alpha}\right)=0$$
for some positive function $\theta_*$. 
We see from the Strong Maximum Principle that $\theta_{a,\alpha}-\theta_{b,\alpha}$ cannot reach a nonnegative interior maximum unless it is a constant. Since $\theta_{a,\alpha}-\theta_{b,\alpha}\to a-b<0$ we arrive at $\theta_{a,\alpha}<\theta_{b,\alpha}$. This further gives $\theta_{a,\alpha}^\alpha< \theta_{b,\alpha}^\alpha$ which, when combined with the representation formula, yields  
$$\begin{aligned}
\theta_{a,\alpha}(x)=&a+\frac{G}{2c^2}\int \frac{\omega(y)}{|x-y|\theta_{a,\alpha}^\alpha(y)}\\
\ge & (a-b)+b+\frac{G}{2c^2}\int \frac{\omega(y)}{|x-y|\theta_{b,\alpha}^\alpha(y)}=(a-b)+\theta_{b,\alpha}(x).
\end{aligned}$$
Overall, we have 
\begin{equation}\label{cty-a}
0\le \theta_{b,\alpha}-\theta_{a,\alpha}\le b-a.
\end{equation}
Th estimate \eqref{cty-a} and the monotonicty formula \eqref{monotone} provide a lower bound $\theta_{a,\alpha}\ge \theta_{1,\alpha}-1\ge \theta_{1,1}-1$, valid for all $a\in (0,1]$ and all $\alpha\in[0,1]$.
For future purposes we note that  
\begin{equation}\label{lowerbound}
\theta_{a,\alpha}\big{|}_{\mathrm{supp}(\omega)}\ge c:=\min_{\mathrm{supp}(\omega)} (\theta_{1,1}-1)=\min_{\mathrm{supp}(\omega)}\frac{G}{2c^2}\int \frac{\omega(y)}{|x-y|\theta_{1,1}(y)} >0
\end{equation} 
for all $a\in (0,1]$ and all $\alpha\in[0,1]$.

Next, temporarily fix $0\le \alpha<\beta\le 1$ and a value $a\in(0,1]$. Consider the solution $\theta_{\mathrm{aux}}$ of the problem 
$$\theta_{\mathrm{aux}}^\alpha \Delta_{\geucl} \theta_{\mathrm{aux}} =-4\pi \tfrac{G}{2c^2} \tfrac{1}{a^{1+\beta}}\phi,\ \ \theta_{\mathrm{aux}}\to 1,$$
and note that 
$$\left(a^{\frac{\beta-\alpha}{1+\alpha}}\theta_{\mathrm{aux}}\right)^\alpha
\Delta_{\geucl}\left(a^{\frac{\beta-\alpha}{1+\alpha}}\theta_{\mathrm{aux}}\right)=-4\pi \tfrac{G}{2c^2} \tfrac{1}{a^{1+\alpha}}\phi,\ \ a^{\frac{\beta-\alpha}{1+\alpha}}\theta_{\mathrm{aux}}\to a^{\frac{\beta-\alpha}{1+\alpha}}\le 1.$$
The monotonicity formula \eqref{monotone} implies 
$$\tfrac{\theta_{a,\beta}}{a}\le \theta_{\mathrm{aux}}$$
while the inequality \eqref{cty-a} gives 
$$a^{\frac{\beta-\alpha}{1+\alpha}}\theta_{\mathrm{aux}}\le \tfrac{\theta_{a,\alpha}}{a}.$$ 
Since $a^{\frac{\alpha-\beta}{1+\alpha}}\le a^{\alpha-\beta}$ due to $a\in (0,1]$, we obtain 
\begin{equation}\label{newmonotonicity}
\theta_{a,\beta}\le a^{\alpha-\beta}\theta_{a,\alpha},\ \ \text{i.e.}\ \ a^\beta\theta_{a,\beta}\le a^\alpha\theta_{a,\alpha}.
\end{equation}

At this stage we may repeat the argument from the end of the proof of Proposition \ref{NoahThm}, with monotonicity formula \eqref{newmonotonicity} replacing \eqref{monotone}. The conclusion is the continuity of the sequence $\theta_{a,\alpha}$ in $\alpha$ for each fixed $a\in(0,1]$. 

Finally, fix $\alpha\in[0,1]$ and consider a sequence $a_n\to a$. By \eqref{cty-a} we see that the sequence $(\theta_{a_n,\alpha})$ is \emph{uniformly} Cauchy in $C^0(K)$ for each compact set $K$. Combining with \eqref{lowerbound} we obtain that both $(\theta_{a_n,\alpha})$ and $\left(\frac{\phi}{\theta_{a_n,\alpha}^{\alpha}}\right)$ are uniformly Cauchy in $L^2(K)$ for each compact $K$. The Interior Elliptic Regularity and a standard bootstrapping argument show that $(\theta_{a_n,\alpha})$ is uniformly Cauchy in each $H^k(K)$. We see from the representation formula that the limit function $\theta$ satisfies 
\begin{equation}\label{zerocase}
\theta(x)=a+\frac{G}{2c^2}\int \frac{\omega(y)}{|x-y|\theta^\alpha(y)}.
\end{equation}
For $a>0$ the identity \eqref{zerocase} shows $\theta=\theta_{a,\alpha}$ i.e. continuity of $\theta_{a,\alpha}$ in $a$. In the case of $a=0$ we use \eqref{zerocase} to first establish the existence of solutions of \eqref{TheAlphaEqn-modified} at $a=0$. The continuity at $a=0$ then follows from the uniformity of the limit $\theta_{a_n,\alpha}\to \theta_{0,\alpha}$. 
\end{proof}

\section{The main result}

The main result of our paper can be summarized by saying that for a fixed $\alpha\in[0,1]$ the generalized gravitational potentials $\theta_{\alpha,n}$ arising from matter distributions of Definition \ref{framework} via the generalized Poisson equation \eqref{TheAlphaEqn} converge to a classical point-source gravitational potential 
$$\theta_{\mathrm{Schw},\alpha}(z)=1+\frac{G}{2c^2}\cdot\frac{\mE(\alpha)}{|z|}.$$
The effective mass $\mE(\alpha)$ of the limit is given in terms of the solution $\Theta_{\alpha}$ of  
\begin{equation}\label{defnThetaF}
\Theta_\alpha^\alpha \Delta_{\geucl}\Theta_\alpha\dvol_{\geucl}=-4\pi \tfrac{G}{2c^2} \Omega,\ \ \Theta_\alpha\to 0.
\end{equation}
and is equal to 
$$\mE(\alpha):=\int_{y}\frac{\Omega(y)}{\Theta_\alpha^\alpha(y)}.$$
Note that $\mE(\alpha)$ varies continuously in $\alpha$ due to Proposition \ref{AngryIvaThm}.

\begin{theorem}\label{thmbaaaa}
The functions $\theta_{\alpha,n}$ converge to $\theta_{\mathrm{Schw},\alpha}$ uniformly with all derivatives on all compact subsets of $\R^3\setminus\{0\}$. Furthermore, the effective mass $\mE(\alpha)$ of $\theta_{\mathrm{Schw},\alpha}$ changes continuously in $\alpha$. 
\end{theorem}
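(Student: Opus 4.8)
The plan is to use the self-similarity relation \eqref{selfsimilaritydefn} to turn the singular limit $n\to\infty$, in which $\omega_n$ concentrates at the origin, into the \emph{regular} limit $\rn\to 0$ of the boundary data already analyzed in Proposition \ref{AngryIvaThm}. Writing $\Omega=\phi_\Omega\,\dvol_{\geucl}$, the relation \eqref{selfsimilaritydefn} gives $\omega_n$ the density $\phi_n=\rn^{-\alpha-3}\phi_\Omega(\,\cdot/\rn)$. I would introduce the rescaled potentials $\Theta_{\alpha,n}(\xi):=\rn\,\theta_{\alpha,n}(\rn\xi)$ and verify, using $\Delta_{\geucl}\big(\rn\,\theta_{\alpha,n}(\rn\,\cdot)\big)=\rn^{3}(\Delta_{\geucl}\theta_{\alpha,n})(\rn\,\cdot)$, that $\Theta_{\alpha,n}$ solves the $\Omega$-version of \eqref{TheAlphaEqn-modified} with boundary value $a=\rn$. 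By the uniqueness clause of Proposition \ref{AngryIvaThm} (applied to the source $\Omega$) this identifies $\Theta_{\alpha,n}=\Theta_{\rn,\alpha}$, so that $\theta_{\alpha,n}(x)=\rn^{-1}\Theta_{\rn,\alpha}(x/\rn)$. Since $\rn\to 0$, the continuity statement of Proposition \ref{AngryIvaThm} yields $\Theta_{\rn,\alpha}\to\Theta_{0,\alpha}=\Theta_\alpha$ with all derivatives on compact sets, in particular uniformly on $\mathrm{supp}(\Omega)$, where \eqref{lowerbound} keeps $\Theta_{\rn,\alpha}$ bounded away from $0$.

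Next I would extract the effective mass. The representation formula reads $\theta_{\alpha,n}(x)=1+\frac{G}{2c^2}\int_y\frac{\omega_n(y)}{|x-y|\,\theta_{\alpha,n}^\alpha(y)}$, so the total effective source is $M_n:=\int_y\frac{\omega_n(y)}{\theta_{\alpha,n}^\alpha(y)}$. Using $\omega_n=\rn^{-\alpha}(\mathcal{H}_{\rn})_*\Omega$ and $\theta_{\alpha,n}(\rn\xi)=\rn^{-1}\Theta_{\rn,\alpha}(\xi)$, all powers of $\rn$ cancel and the change of variables gives $M_n=\int_\xi\frac{\Omega(\xi)}{\Theta_{\rn,\alpha}^\alpha(\xi)}$. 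By the convergence from the previous paragraph, uniform on $\mathrm{supp}(\Omega)$ and with the uniform positive lower bound \eqref{lowerbound}, I obtain $M_n\to\int\frac{\Omega}{\Theta_\alpha^\alpha}=\mE(\alpha)$.

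With $M_n\to\mE(\alpha)$ in hand, I would prove convergence away from the origin by a far-field estimate. Fix a compact $K\subset\R^3\setminus\{0\}$ with $\mathrm{dist}(K,0)\ge\delta$. Subtracting $\theta_{\mathrm{Schw},\alpha}(x)=1+\frac{G}{2c^2}\,\mE(\alpha)/|x|$ from the representation formula and splitting the kernel gives, for $x\in K$,
$$\theta_{\alpha,n}(x)-\theta_{\mathrm{Schw},\alpha}(x)=\frac{G}{2c^2}\left[\int_y\Big(\tfrac{1}{|x-y|}-\tfrac{1}{|x|}\Big)\frac{\omega_n(y)}{\theta_{\alpha,n}^\alpha(y)}+\frac{M_n-\mE(\alpha)}{|x|}\right].$$
Since $\mathrm{supp}(\omega_n)=\mathcal{H}_{\rn}(\mathrm{supp}\,\Omega)$ satisfies $|y|\le\rn\rho$ there (with $\rho:=\sup_{\mathrm{supp}\,\Omega}|\xi|$), for $x\in K$ one has $\big|\tfrac{1}{|x-y|}-\tfrac{1}{|x|}\big|\le |y|/(|x|\,|x-y|)\lesssim\delta^{-2}\rn\rho\to0$, while $M_n$ stays bounded; together with $M_n-\mE(\alpha)\to0$ this forces the right-hand side to $0$ uniformly on $K$. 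To upgrade to all derivatives I would note that for $n$ large enough that $\mathrm{supp}(\omega_n)$ is disjoint from a neighborhood of $K$, both $\theta_{\alpha,n}$ and $\theta_{\mathrm{Schw},\alpha}$ are harmonic there, so the uniform convergence self-improves to $C^\infty$ convergence by the interior derivative estimates for harmonic functions (equivalently, one may differentiate the smooth kernel $1/|x-y|$ under the integral). Finally, the continuity of $\mE(\alpha)=\int\frac{\Omega}{\Theta_{0,\alpha}^\alpha}$ in $\alpha$ is immediate from Proposition \ref{AngryIvaThm}: $\Theta_{0,\alpha}$ varies continuously in $\alpha$, uniformly on $\mathrm{supp}(\Omega)$, where \eqref{lowerbound} (extended to $a=0$ by continuity in $a$) bounds it below, so the integrand varies uniformly in $\alpha$.

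The step I expect to carry the real weight is the reduction in the first paragraph together with the mass computation: the crux is recognizing the \emph{exact} self-similar scaling $\theta_{\alpha,n}(x)=\rn^{-1}\Theta_{\rn,\alpha}(x/\rn)$, which trades the concentration of $\omega_n$ for a vanishing boundary value, and then realizing that it is precisely the continuity down to $a=0$ supplied by Proposition \ref{AngryIvaThm} — rather than any direct analysis of the singular limit — that both pins the far-field coefficient to $\int\Omega/\Theta_\alpha^\alpha$ and, via \eqref{lowerbound}, keeps the source $\omega_n/\theta_{\alpha,n}^\alpha$ under control near the shrinking support. Once this scaling dictionary and the lower bound are in place, the remaining far-field and regularity estimates are routine.
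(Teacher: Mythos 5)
Your proposal is correct and follows essentially the same route as the paper: the crux in both is the rescaling $\Theta_{\alpha,n}(\xi)=\rn\,\theta_{\alpha,n}(\rn\xi)$, which converts the concentrating source into the fixed source $\Omega$ with boundary value $a=\rn\to 0$, after which Proposition \ref{AngryIvaThm} supplies $\Theta_{\rn,\alpha}\to\Theta_{0,\alpha}=\Theta_\alpha$ and hence the effective mass $\mE(\alpha)$ and its continuity. The only (harmless) divergence is in the far-field step, where the paper expands $1/|x-y|$ in a full multipole series whose higher terms carry powers of $d_n$, while you use the first-order kernel estimate plus harmonicity of $\theta_{\alpha,n}$ away from the shrinking support to get the derivative convergence --- arguably a slightly cleaner way to handle the ``all derivatives'' clause.
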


\begin{proof}
We introduce an auxiliary sequence of functions 
$$\Theta_{\alpha,n}(x)=d_n\mathcal{H}_{d_n}^* \theta_{\alpha,n}(x)=d_n\theta_{\alpha,n}(d_nx)$$
and note the change in the boundary condition: $\Theta_{\alpha,n}\to d_n$. Pulling back the equation \eqref{TheAlphaEqn} along $\mathcal{H}_{d_n}$ gives  
$$\Theta_{\alpha,n}^\alpha \Delta_{\geucl}\Theta_{\alpha,n} \dvol_{\geucl}=-4\pi \tfrac{G}{2c^2} \Omega.$$
The Representation Formula produces 
$$\Theta_{\alpha,n}(x)=d_n+\frac{G}{2c^2}\int_y \frac{\Omega(y)}{|x-y|\Theta_{\alpha,n}^\alpha(y)},$$
with the integration going only over $y\in \mathrm{supp}(\Omega)$. 
For $x \notin \mathrm{supp}(\Omega)$ the expression 
$\frac{1}{|x-y|}$ can be expanded using a power series, which converges when $\frac{|y|}{|x|}$ is substantially small. Inserting this expansion into the representation formula for $\Theta_{\alpha,n}$ yields 
$$\Theta_{\alpha,n}(x)=d_n+\frac{1}{|x|}\frac{G}{2c^2}\int_{y}\frac{\Omega(y)}{\Theta_{\alpha,n}^\alpha(y)}+\sum_{l=1}^\infty \frac{C_l\left(x\right)}{|x|^{2l+1}} \frac{G}{2c^2}\int_{y}\frac{P_l(y) \Omega(y)}{\Theta_{\alpha,n}^\alpha(y)}$$
where $C_l$ and $P_l$ are homogeneous polynomials of degree $l$. Overall, it follows that for some constant $C$ and for all $|z|\geq d_n C$ we have 
$$\theta_{\alpha,n}(z)=1+\frac{1}{|z|}\frac{G}{2c^2}\left(\int_{y}^{}\frac{\Omega(y)}{\Theta_{\alpha,n}^\alpha(y)}\right)+\sum_{l=1}^{\infty}\frac{C_l(z)}{|z|^{2l+1}}\frac{G}{2c^2} \left(\int_{y}\frac{P_l(y) \Omega(y)}{\Theta_{\alpha,n}^\alpha(y)}\right)d_n^{l}.$$

By Proposition \ref{AngryIvaThm} we see that (on all compacts, and with all derivatives) we have $\Theta_{\alpha,n}\to \Theta_\alpha$ where $\Theta_\alpha$ is as in  \eqref{defnThetaF}. In particular, we have that the coefficients $\int_{y}^{}\frac{\Omega(y)}{\Theta_{\alpha,n}^\alpha(y)}$ and $\int_{y}\frac{P_l(y) \Omega(y)}{\Theta_{\alpha,n}^\alpha(y)}$ converge to $\int_{y}^{}\frac{\Omega(y)}{\Theta_\alpha^\alpha(y)}$ and $\int_{y}\frac{P_l(y) \Omega(y)}{\Theta_\alpha^\alpha(y)}$. The claim of our theorem is now immediate from boundedness of $|z|$ (away from $0$ and  $\infty$), and the fact that $d_n\to 0$. 
\end{proof}

\section{Conclusion and directions for further research}
Any treatment of point-particles (including the one of \cite{NI}) should be taken with a grain of salt until one can prove a theorem which realizes a dust cloud (or the like) as a cumulative effect (integral?) of point-particles. That such a theorem is possible within the framework of \cite{NI} is made plausible by a result announced in \cite{Banff}, where a time-symmetric, conformally flat, dust initial data distribution is seen as a pointed intrinsic flat limit of Brill-Lindquist initial data. We plan to investigate the possibility of this kind of a theorem in the future.

\bibliographystyle{plain}
\bibliography{2014}

\end{document}